\documentclass[conference,a4paper]{IEEEtran}

%

\usepackage{multicol}
\usepackage{graphicx}

\DeclareMathAlphabet{\mathpzc}{OT1}{pzc}{m}{it}
\usepackage{bbm}
\usepackage{dsfont}
\usepackage{yfonts}
\usepackage{amsmath}
\usepackage{amssymb}
\usepackage{amsfonts}
\usepackage{mathrsfs}

\usepackage{algorithm} 
\usepackage{algorithmic} 
\usepackage{multirow} 
\usepackage{xcolor}
\usepackage{cite}

\DeclareMathOperator*{\argmax}{argmax}
\DeclareMathOperator*{\argmin}{argmin}

\usepackage{multicol}
\usepackage{graphicx}

\DeclareMathAlphabet{\mathpzc}{OT1}{pzc}{m}{it}
\usepackage{bbm}
\usepackage{dsfont}
\usepackage{yfonts}
\usepackage{amssymb}
\usepackage{amsfonts}
\usepackage{mathrsfs}
\usepackage{graphicx}

\usepackage{amsmath,amsthm}

\usepackage{algorithm} 
\usepackage{algorithmic} 
\usepackage{multirow} 
\usepackage{xcolor}
\usepackage{cite}
\usepackage{tabularx}

\usepackage{amsmath}
\usepackage{amssymb}

\usepackage{algorithm}
\usepackage{algorithmic}
\ifCLASSINFOpdf
\else
\fi
\hyphenation{op-tical net-works semi-conduc-tor}

\begin{document}
\title{A High-Accuracy Adaptive Beam Training Algorithm for MmWave Communication}

\author{\IEEEauthorblockN{Zihan Tang, Jun Wang, Jintao Wang, and Jian Song}
\IEEEauthorblockA{Beijing National Research Center for Information Science and Technology (BNRist),\\
Tsinghua University, Beijing, China}
Email: tangzh14@mails.tsinghua.edu.cn
}




\maketitle

\begin{abstract}
In millimeter wave communications, beam training is an effective way to achieve beam alignment. Traditional beam training method allocates training resources equally to each beam in the pre-designed beam training codebook. The performance of this method is far from satisfactory, because different beams have different beamforming gain, and thus some beams are relatively more difficult to be distinguished from the optimal beam than the others. In this paper, we propose a new beam training algorithm which adaptively allocates training resources to each beam. Specifically, the proposed algorithm allocates more training symbols to the beams with relatively higher beamforming gain, while uses less resources to distinguish the beams with relatively lower beamforming gain. Through theoretical analysis and numerical simulations, we show that in practical situations the proposed adaptive algorithm asymptotically outperforms the traditional method in terms of beam training accuracy. Moreover, simulations also show that this relative performance behavior holds in non-asymptotic regime.
\end{abstract}

\begin{IEEEkeywords}
Millimeter wave communication, beam training, misalignment probability
\end{IEEEkeywords}

%
\IEEEpeerreviewmaketitle

\section{Introduction}
Millimeter wave (mmWave) communication is one of the most promising technologies of the fifth generation (5G) communication systems, due to the large spectrum resources in the mmWave bands (30-300 GHz)\cite{boccardi.{2014},andrews.{2014}}. Despite its great potential\cite{rappaport.{2013},akdeniz.{2014}}, there are still many problems to be solved before mmWave communications can be realized and deployed in practice. One of the key challenges is that signals in the mmWave bands experience much severer large-scale pathloss compared to signals in lower frequency bands. To overcome this difficulty, large scale antenna arrays with highly directional beamforming technology is used\cite{roh.{2014},alkhateeb.{2014},health.{2016}}. With perfect channel state information, it is easy to achieve beam alignment, which is important to achieve large power gain of the antenna array system\cite{roh.{2014},health.{2016},sohrabi.{2016}}. However, due to the large number of antennas and hardware resource constraints, accurate estimation of the channel matrix is intractable. Another effective approach to realize beam alignment is beam training, in which transmitter and receiver jointly examine beam pairs from pre-designed beam codebooks to find the strongest multi-path component. This approach does not require explicit estimation of the channel matrix, and is particularly appropriate for the sparse mmWave channel\cite{xiao.{2016},seo.{2016},kok.{2017}}.

There are two different beam-training methods considered in the literature. The first one is exhaustive search, which examines all beam pairs in the pre-designed beam codebook and use the strongest beam pair to transmit data. The training overhead of this method is proportional to the size of the search space and thus can be prohibitive when narrow beams are used. To reduce the training overhead, hierarchical beam search method was proposed\cite{alk.{2014}}. In this method, several codebooks with different beam widths are used. The hierarchical beam search first finds the strongest beam pair in a low-level codebook, and then iteratively refines the search using the beams in the next-level codebook within the beam subspace of the wide-beam pair found. Compared with exhaustive search, hierarchical search can effectively reduce search space. However, hierarchical search has an inherent error propagation problem, which means that if the chosen beam pair is incorrect in any given stage, all subsequent searches will be wrong. Furthermore, an early stage with wide beams is more likely to cause error owing to the relatively low beamforming gain, which lead to a higher chance of failing to find the best beam pair at an early stage. Actually, in terms of misalignment probability, researchers have shown the exhaustive search outperforms the hierarchical search, subject to the same training overhead and beam resolution \cite{liu.{2017}}.

In this paper, the beam training problem is studied from a different perspective and a new beam training algorithm is proposed. Unlike traditional exhaustive search, which allocates training resources equally to different beam pairs in the codebook, the new algorithm has the ability to allocate training resources adaptively. Specifically, when using the proposed algorithm, the beam pairs with higher beamforming gain, which are relatively more difficult to be distinguished from the optimal beam pair, are allocated more training resources, while the beam pairs with lower beamforming gain, which are relatively easier to be distinguished from the optimal beam pair, are allocated less training resources. In terms of beam training performance, we give an asymptotic upper bound of misalignment probability for the proposed algorithm. Using this upper bound, we show that in practical situations, the proposed algorithm asymptotically outperforms traditional exhaustive search. Moreover, numerical simulation shows that this relative performance behavior holds in both asymptotic and non-asymptotic regime.

\begin{figure}[h!]
    \centering
    \includegraphics[width=7.6cm]{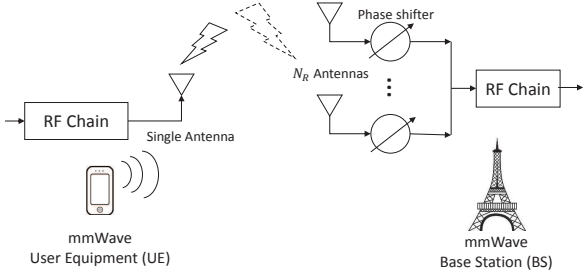}
    \small
    \caption{An illustration of the mmWave uplink communication system considered in this paper.}
    \label{figmodel}
\end{figure}

\section{system description and problem formulation}
We consider mmWave large scale antenna array uplink communication, in which the user equipment (UE) wishes to communicate with the base station (BS) that is equipped with $N_R$ antennas, as depicted in Fig $\ref{figmodel}$. Since the number of antennas at the UE is relatively small and the burden of beam training mainly lies in the BS, for brevity, we assume that the UE is equipped with a single antenna. Similar to \cite{liu.{2017}}, we also assume that the BS has a single RF chain, and thus analog receive beamforming is adopted. In the process of beam training, the UE repeatedly transmits $N$ training symbols, and the BS receives the training symbols using the beams from the pre-designed beam training codebook. Based on the received signal, the BS selects an optimal beam to receive the subsequent data from the UE. We assume that the BS and the UE are synchronized during the whole beam training and data transmission process.

\begin{figure}[h!]
    \centering
    \includegraphics[width=5.2cm]{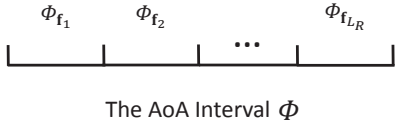}
    \small
    \caption{An illustration of the coverage intervals of each beam in the beam training codebook and the whole AoA interval.}
    \label{figmodel2}
\end{figure}

Denote $\Phi$ as the Angle of Arrival (AoA) search interval of the BS. Let $\mathbf{f}_{l}\in \mathbb{C}^{1\times N_R}$ denote an arbitrary BS receive beamforming vector, with coverage interval $\Phi_{\mathbf{f}_{l}}$. For brevity, $\mathbf{f}_l$ is also used to refer to the corresponding beam in this paper. Let $\mathcal{C}_{R}=\{\mathbf{f}_{l}, l=1,2,\cdots,L_R\}$ denote the pre-designed receive beam training codebook, whose size equals $L_R$. In general, the beams in $\mathcal{C}_{R}$ jointly cover $\Phi$, as depicted in Fig $\ref{figmodel2}$, where the coverage intervals of each beam in the codebook and the whole search interval $\Phi$ are illustrated by line segments. The object of beam training is to identify the best beam in the beam training codebook $\mathcal{C}_{R}$. For example, when traditional exhaustive search is used, the BS allocates $N$ training symbols equally to all $L_R$ beams, which means that each beam of the codebook is used to receive $\frac{N}{L_R}$ consecutive training symbols. After signal processing at the BS, the beam with the strongest beamforming gain will be used to receive data till the next beam training block begins. 

Assume that the channel is frequency-flat and block fading, and remains unchanged during the beam training process. Let $\mathbf{s}\in\mathbb{C}^{1\times K}$ be the $K_l$ identical training symbols received by the BS using the beam corresponding to $\mathbf{f}_{l}$. Then the received signal can be represented as
\begin{equation}
\label{model1}
\begin{aligned}
\mathbf{y}_l &= \mathbf{f}_{l} \mathbf{h}  \mathbf{s} + \mathbf{f}_l \mathbf{Z}_l\\
             &= h_l \mathbf{s} + \mathbf{z}_l,\ \ \  l=1,\cdots,L_R.
\end{aligned}
\end{equation}
where $\mathbf{h}\in \mathbb{C}^{N_R\times 1}$ is the mmWave channel vector, $\mathbf{Z}_l\in \mathbb{C}^{N_R\times K_l}$ is the noise matrix with i.i.d. components $\sim\mathcal{CN}(0,\sigma^2)$, and
\begin{equation}
\label{gain1}
h_l \triangleq \mathbf{f}_{l} \mathbf{h},
\end{equation}
is defined as the effective channel after receive beamforming. Due to the single RF chain constraint, the entries of the receive beamforming vector are either of constant modulus or zero. Moreover, the receive beamforming vector is assumed to satisfy $\Vert\mathbf{f}_l\Vert_2^2=1$. Therefore, the effective noise vector $\mathbf{z}_l\in \mathbb{C}^{1\times K_l}$ has i.i.d. components $\sim\mathcal{CN}(0,\sigma^2)$. We further assume that all training symbols are transmitted with the same power $P_T$, which means $\Vert\mathbf{s}\Vert_2^2=K_lP_T$.

Throughout the analysis of this paper, we consider a simple single-path deterministic mmWave channel model which is tractable and catches the essence of the problem. The key insights generated from the analysis of this paper can be applied to the more general cases, in which the channel has multi-path components and the BS is provided with more than one RF chains. 
For the single-path case considered in this paper, the channel vector $\mathbf{h}$ can be represented as
\begin{equation}
\label{channel}
\mathbf{h} = \alpha\mathbf{u}^\dag (\phi ),
\end{equation}
where $\vert\alpha\vert^2$ is the path gain and $\mathbf{u} (\phi )\in \mathbb{C}^{1\times N_R}$ is the steering vector corresponding to AoA $\phi$. For instance, if uniform linear array with half wave-length antenna spacing is equipped at the BS, the steering vector $\mathbf{u} (\phi )$ can be represented as
\begin{equation}
\label{steering}
\mathbf{u}\left(\phi\right) = \left[1,e^{j\pi\text{sin}\left(\phi\right)},\cdots,e^{j\pi\left(N_R-1\right)\text{sin}\left(\phi\right)}\right].
\end{equation}
From $\left(\ref{gain1}\right)$ and $\left(\ref{channel}\right)$, the effective channel gain under any fixed receive beamgorming can be represented as
\begin{equation}
\label{eff_gain}
\begin{aligned}
g_l \triangleq \vert h_l\vert^2 & = \vert \alpha \left(\mathbf{f}_l \mathbf{u}^{\dag}\left(\phi\right)\right)\vert^2,
\end{aligned}
\end{equation}
where $\vert\mathbf{f}_l \mathbf{u}^{\dag}\left(\phi\right)\vert^2$ is the receive beamforming gain of $\mathbf{f}_l$ at AoA $\phi$. The beam training problem is essentially identifying the beam index corresponding to the strongest effective channel gain $\vert h_l\vert^2$, which is
\begin{equation}
\label{opt}
l_{\text{opt}} = \argmax_{l=1,\cdots,L_R} g_l.
\end{equation}

 Without noise, the BS can easily compute the exact value of $h_l$, and then gives the exact solution $l_\text{opt}$. However, since the received signal is swayed by unknown noise, $h_l$ can not be computed exactly. The signal processing method in \cite{liu.{2017}} is adopted here, in which the received signal is match-filtered with the training sequence $\mathbf{s}$ and normalized by $\frac{\sigma^2\Vert\mathbf{s}\Vert^2}{2}$. When each beam is allocated the same number of training symbols (in traditional exhaustive search, $K_l=\frac{N}{L_R}$ for each $l$), the estimated optimal beam index is selected as the one that gives rise to the strongest processing output
\begin{equation}
\label{opt_est}
\hat{l}_{\text{opt}} = \argmax_{l=1,\cdots,L_R} T_l\left(K_l\right),
\end{equation}
in which $T_l\left(K_l\right)$ is defined as the processing output of the $K_l$ training symbols received by the beamforming vector $\mathbf{f}_l$
\begin{equation}
\label{processing}
T_l\left(K_l\right) = \frac{2\vert\mathbf{y}_l\mathbf{s}^{\dag}\vert^2}{\sigma^2\Vert\mathbf{s}\Vert^2}.
\end{equation}

As explained in \cite{liu.{2017}}, $T_l\left(K_l\right)$ essentially captures the energy received by beam $\mathbf{f}_l$, whose expectation is proportional to the number of received training symbols $K_l$, the transmit power $P_T$, as well as the effective channel gain $g_l$. Actually, $T_l\left(K_l\right)$ admits a non-central chi-square distribution with 2 degrees of freedom (DoFs) and a non-centrality parameter
\begin{equation}
\label{lamda}
\lambda_l = \frac{2\vert h_l\vert^2\Vert\mathbf{s}\Vert^2}{\sigma^2}=\frac{2K_lP_Tg_l}{\sigma^2},
\end{equation}
i.e., we have
\begin{equation}
\label{ncx2}
T_l\left(K_l\right) \sim \mathcal{X}_2^2\left(\lambda_l\right),\ \ l=1,2,\cdots,L_R.
\end{equation}
In traditional exhaustive search, since $K_l=N/L_R$ for all $l=1,2,\cdots,L_R$, and all training symbols are transmitted with the same pwoer $P_T$, we can see from ($\ref{lamda}$) that larger $g_l$ gives larger $\lambda_l$ and ($\ref{opt_est}$) is intuitively effective to estimate the optimal beam index. Now it is clear that a misalignment event occurs whenever the estimated optimal beam index $\hat{l}_{\text{opt}}$ is not equal to the true optimal beam index $l_{\text{opt}}$. In the next section, we will propose a new beam training algorithm and analyze the asymptotic performance of it in terms of misalignment probability.

\section{an adaptive beam training algorithm and performance analysis}
\label{adaptive}
In this section, two lemmas relevant to the asymptotic performance of traditional exhaustive search are introduced at first. The intuition of these two lemmas inspires the development of the beam training algorithm in this paper. Then details of the new algorithm as well as an asymptotic performance upper bound is given. Using this bound, we concludes that our proposed algorithm asymptotically outperforms traditional exhaustive search. Before that, we remark that the analysis in this paper focuses on the impact of training resources that tends to be large (infinite). As explained in \cite{liu.{2017}}, this is motivated by the large coherence bandwidth of mmWave band, and thus the number of symbols accommodated within a coherent time interval can be very large. The asymptotic analysis is hence useful to provide guideline on practical system designs. Actually, the numerical simulation in the next section shows that this relative performance behavior holds in both the asymptotic and non-asymptotic regime.

Now we introduce some notations which will be used later. Define the normalized beamforming gain as $\xi_l\triangleq\frac{2P_Tg_l}{\sigma^2}$ ($l=1,\cdots,L_R$). For $l\neq l_{\text{opt}}$, we introduce the following suboptimality measure of beam $l$
\begin{equation}
\label{subo}
\Delta_l=\sqrt{\xi_{l_{\text{opt}}}}-\sqrt{\xi_l}.
\end{equation}
We also define $\Delta_{l^*}$ as the minimal gap
\begin{equation}
\label{decondsub}
\Delta_{l^*}=\min_{l\neq l_{\text{opt}}}\Delta_l,
\end{equation}
which is also the suboptimality measure of the second best beam. For reasons that will be obvious later, we introduce the notation $\left(l\right)\in\left\{1,2,\cdots,L_R\right\}$ to denote the $l$-th best beam (with ties beak arbitrarily), hence
\begin{equation}
\label{permut}
\Delta_{l^*}=\Delta_{\left(1\right)}=\Delta_{\left(2\right)}\leq\Delta_{\left(3\right)}\leq\cdots\leq\Delta_{\left(L_R\right)}.
\end{equation}

In exhaustive search, the training symbols are equally allocated to all the beams in $\mathcal{C}_{R}$, which means that $K_l=\frac{N}{L_R}$ for all $l=1,2,\cdots,L_R$. It is straightforward to see that the misalignment probability of exhaustive search can be represented as
\begin{equation}
\label{mis_ex}
\begin{aligned}
p_{\text{miss}}^{\text{ex}}\left(N\right) &=\text{Pr}\left\{ \hat{l}_{\text{opt}}\neq l_{\text{opt}}\right\}\\
&= \text{Pr}\left\{ \bigcup_{l=1,l\neq l_{\text{opt}}}^{L_R} T_{l_{\text{opt}}}(\frac{N}{L_R})< T_l(\frac{N}{L_R})\right\}.
\end{aligned}
\end{equation}
Using the union bound, we can derive an upper bound of $p_{\text{miss}}^{\text{ex}}\left(N\right)$, which is
\begin{equation}
\label{up_ex}
\begin{aligned}
p_{\text{up}}^{\text{ex}}\left(N\right) = \sum_{l=1,l\neq l_{\text{opt}}}^{L_R}\text{Pr}\left\{ T_{l_{\text{opt}}}(\frac{N}{L_R})< T_l(\frac{N}{L_R}) \right\}.
\end{aligned}
\end{equation}
Using the large deviation theory, the authors in \cite{liu.{2017}} have proved an important result in terms of the misalignment probability of exhaustive search. Without proof, we restate the result here in the form of the following lemma
\newtheorem{myth}{Lemma}
\begin{myth}
\label{asym_ex}
In exhaustive search, assume that the total number of training symbols is $N$, the number of beams is $L_R$, and each beamforming vector is allocated $\frac{N}{L_R}$ training symbols. Then the misalignment probability $p_{\text{miss}}^{\text{ex}}(N)$ as well as the upper bound $p_{\text{up}}^{\text{ex}}(N)$ satisfies
\begin{equation}
\label{performance}
\begin{aligned}
\lim_{N\to \infty}\frac{1}{N}\log p_{\text{miss}}^{\text{ex}}(N) &= \lim_{N\to \infty}\frac{1}{N}\log p_{\text{up}}^{\text{ex}}(N)\\
&=-\frac{1}{4L_R\Delta_{l^*}^{-2}},
\end{aligned}
\end{equation}
where $l^* \triangleq \argmax_{l=1,\cdots,L_R,\ l\neq l_{\text{opt}}}\xi_l$.
\end{myth}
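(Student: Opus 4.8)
The plan is to reduce the misalignment event to a single pairwise comparison, quantify its exponential decay rate via large deviations theory, and then let the union bound (from above) and a single dominant term (from below) pinch the two normalized log-probabilities together. First I would isolate, for each suboptimal beam $l\neq l_{\text{opt}}$, the pairwise misalignment probability $P_l(N)=\text{Pr}\{T_{l_{\text{opt}}}(N/L_R)<T_l(N/L_R)\}$, which by $(\ref{up_ex})$ is exactly the summand of $p_{\text{up}}^{\text{ex}}(N)$. Writing $K=N/L_R$ and invoking $(\ref{ncx2})$, both $T_{l_{\text{opt}}}(K)$ and $T_l(K)$ are \emph{independent} non-central chi-square variables with $2$ DoF and non-centrality parameters $K\xi_{l_{\text{opt}}}$ and $K\xi_l$. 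The whole analysis is therefore governed by the scalar difference $D_l=T_l(K)-T_{l_{\text{opt}}}(K)$, with $P_l(N)=\text{Pr}\{D_l>0\}$.

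Next I would compute the scaled cumulant generating function of $D_l$. Using the closed-form MGF of a $2$-DoF non-central chi-square, $E[e^{sT}]=(1-2s)^{-1}\exp(\lambda s/(1-2s))$ for $s<1/2$, together with independence, one obtains
\begin{equation}
\frac{1}{K}\log E\!\left[e^{sD_l}\right]\ \xrightarrow{K\to\infty}\ \psi(s)\triangleq\frac{\xi_l s}{1-2s}-\frac{\xi_{l_{\text{opt}}}s}{1+2s},\qquad s\in\left(-\tfrac{1}{2},\tfrac{1}{2}\right),
\end{equation}
the subexponential prefactor $[(1-2s)(1+2s)]^{-1}$ dropping out. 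Since $\psi$ is convex and finite on $(-1/2,1/2)$ with $\psi'(0)=\xi_l-\xi_{l_{\text{opt}}}<0$, the event $\{D_l\geq 0\}$ is a right-tail large deviation, and the G\"artner--Ellis theorem gives $\lim_{K\to\infty}\frac{1}{K}\log P_l(N)=\inf_{0<s<1/2}\psi(s)$. Carrying out this one-dimensional minimization — setting $\psi'(s)=0$ yields $(1+2s)/(1-2s)=\sqrt{\xi_{l_{\text{opt}}}/\xi_l}$, hence $s^*=(r-1)/(2(r+1))$ with $r=\sqrt{\xi_{l_{\text{opt}}}/\xi_l}$ — and substituting back collapses to the clean identity $\inf_s\psi(s)=-\tfrac14(\sqrt{\xi_{l_{\text{opt}}}}-\sqrt{\xi_l})^2=-\tfrac14\Delta_l^2$. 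Each pairwise term therefore decays as $\frac{1}{N}\log P_l(N)\to-\Delta_l^2/(4L_R)$.

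Finally I would assemble the two limits. For the upper direction, the Chernoff bound applied termwise gives $p_{\text{up}}^{\text{ex}}(N)\leq (L_R-1)\,C\,e^{-K\Delta_{l^*}^2/4}$; because $L_R$ is fixed and $\Delta_{l^*}=\min_{l\neq l_{\text{opt}}}\Delta_l$ by $(\ref{decondsub})$–$(\ref{permut})$, the slowest-decaying term dominates, so $\limsup\frac{1}{N}\log p_{\text{up}}^{\text{ex}}(N)\leq-\Delta_{l^*}^2/(4L_R)$. For the lower direction, both $p_{\text{up}}^{\text{ex}}(N)$ and $p_{\text{miss}}^{\text{ex}}(N)$ dominate the single term $P_{l^*}(N)$ attached to $l^*=\argmax_{l\neq l_{\text{opt}}}\xi_l$ (the beam realizing $\Delta_{l^*}$, since $\sqrt{\cdot}$ is increasing), whose rate we just computed to be $-\Delta_{l^*}^2/(4L_R)$. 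Combining with $p_{\text{miss}}^{\text{ex}}(N)\leq p_{\text{up}}^{\text{ex}}(N)$ squeezes both normalized log-probabilities to the common value $-\Delta_{l^*}^2/(4L_R)=-1/(4L_R\Delta_{l^*}^{-2})$.

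The main obstacle I anticipate lies in the large-deviations \emph{lower} bound rather than the upper bound. The upper bound is the elementary Chernoff inequality, but matching it requires the lower half of G\"artner--Ellis, whose hypotheses must be checked carefully: because the non-centrality parameters grow with $K$, the variables $T_l(K)$ form a triangular array rather than a sum of i.i.d.\ summands, so Cram\'er's theorem does not apply directly, and one must verify the essential smoothness (steepness) of $\psi$ as $s\to\pm1/2$ and that the origin lies in the interior of its effective domain. A secondary point needing justification is that the union bound is exponentially tight here — legitimate precisely because $L_R$ is held fixed as $N\to\infty$, so neither the factor $L_R-1$ nor the bounded prefactors perturb the exponent.
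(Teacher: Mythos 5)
The paper does not prove this lemma at all: it explicitly restates the result ``without proof,'' attributing it to the large-deviations analysis of Liu \emph{et al.}\ \cite{liu.{2017}}. Your reconstruction follows what is essentially that same route and is correct. The key computation checks out: with $\lambda_l=K\xi_l$ the limiting scaled cumulant generating function of $D_l=T_l(K)-T_{l_{\text{opt}}}(K)$ is indeed $\psi(s)=\frac{\xi_l s}{1-2s}-\frac{\xi_{l_{\text{opt}}}s}{1+2s}$ on $(-\frac12,\frac12)$, its minimizer satisfies $\frac{1+2s^*}{1-2s^*}=\sqrt{\xi_{l_{\text{opt}}}/\xi_l}$, and substitution gives $\psi(s^*)=-\frac14(\sqrt{\xi_{l_{\text{opt}}}}-\sqrt{\xi_l})^2=-\frac14\Delta_l^2$, which after the change of time scale $K=N/L_R$ yields the per-pair rate $-\Delta_l^2/(4L_R)$; the squeeze between the union bound and the single dominant term $P_{l^*}(N)$ (with $p_{\text{miss}}^{\text{ex}}\geq P_{l^*}$ and $p_{\text{miss}}^{\text{ex}}\leq p_{\text{up}}^{\text{ex}}$) then closes both limits. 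The one caveat you raise --- that the matching lower bound needs the G\"artner--Ellis lower half for a triangular array rather than Cram\'er --- is exactly the right point to worry about, and it is resolved as you suggest: $\psi$ is finite and differentiable on the open interval $(-\frac12,\frac12)$ containing the origin, $\psi'(s)\to+\infty$ as $s\to\frac12^-$ and $\psi'(s)\to-\infty$ as $s\to-\frac12^+$, so $\psi$ is essentially smooth, and continuity of the convex rate function $\psi^*$ at $0$ gives $\inf_{x>0}\psi^*(x)=\psi^*(0)=\Delta_l^2/4$. So your argument is a complete, self-contained proof of a statement the paper merely imports.
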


As lemma $\ref{asym_ex}$ shows, the asymptotic performance of traditional exhaustive search is proportional to the difference between square roots of the strongest beamforming gain and the second strongest beamforming gain. It seems that the performance mainly depends on how well the algorithm can distinguish the optimal beam from the second-best one. In fact, the algorithm needs to distinguish the optimal beam from all the other beams in $\mathcal{C}_{R}$, and exhaustive search implements this task by allocating the same number of training symbols to all the beams. Here we want to break the constraint of uniform allocation of training resource to increase the performance of beam training accuracy. In order to demonstrate the essence of the problem more clearly, we need to analyze the pair-wise misalignment probability. The result is summarized in the following lemma $\ref{theo_pair}$, the proof of which can be found in \cite{liu.{2017}}.

\newtheorem{myth2}{Lemma}
\begin{myth}
\label{theo_pair}
Assume that the optimal beam $\mathbf{f}_{l_{\text{opt}}}$ and the $\tilde{l}$-th-best beam $\mathbf{f}_{\left(\tilde{l}\right)}$ ($\tilde{l}=2,3,\cdots,L_R$) are allocated the same number of training symbols: $K_{\left(\tilde{l}\right)}=K_{l_{\text{opt}}}=K$. Then the pair-wise misalignment probability satisfies
\begin{equation}
\label{pair}
\lim_{K\to \infty}\frac{1}{K}\log\text{Pr}\left\{ T_{l_{\text{opt}}}\left(K\right)< T_{\tilde{l}}\left(K\right) \right\}=-\frac{1}{4\Delta_{\tilde{l}}^{-2}}.
\end{equation}
\end{myth}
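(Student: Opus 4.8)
The plan is to extract the exponential decay rate of the pair-wise error by the large-deviation (Chernoff) method applied to the difference of two independent non-central chi-square variables. First I would set $S_K \triangleq T_{\tilde l}(K)-T_{l_{\text{opt}}}(K)$, so that the event $\{T_{l_{\text{opt}}}(K)<T_{\tilde l}(K)\}$ is exactly $\{S_K\ge 0\}$ (the strict and non-strict versions coincide since the laws are continuous). Because the two outputs are independent and each is $\mathcal{X}_2^2$ with non-centrality linear in $K$, namely $\lambda_{l_{\text{opt}}}=K\xi_{l_{\text{opt}}}$ and $\lambda_{\tilde l}=K\xi_{(\tilde l)}$ by \eqref{lamda}--\eqref{ncx2}, I can write $\mathbb{E}[e^{sS_K}]=\mathbb{E}[e^{sT_{\tilde l}(K)}]\,\mathbb{E}[e^{-sT_{l_{\text{opt}}}(K)}]$ in closed form using the known transform $(1-2t)^{-1}\exp(\lambda t/(1-2t))$, valid for $t<1/2$. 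The fixed $2$ degrees of freedom produce only a $K$-independent prefactor $[(1-2s)(1+2s)]^{-1}$, while the entire $K$-dependence sits in an exponential factor; this is precisely the structure that makes a Gärtner--Ellis (rather than an i.i.d.\ Cram\'er) analysis the right tool.

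Next I would isolate the scaled cumulant generating function
\[
\Lambda(s)=\lim_{K\to\infty}\frac1K\log\mathbb{E}\!\left[e^{sS_K}\right]
=\frac{\xi_{(\tilde l)}\,s}{1-2s}-\frac{\xi_{l_{\text{opt}}}\,s}{1+2s},\qquad -\tfrac12<s<\tfrac12,
\]
the prefactor being sub-exponential and hence irrelevant to the rate. The Chernoff bound gives $\limsup_{K}\frac1K\log\Pr\{S_K\ge0\}\le\inf_{0<s<1/2}\Lambda(s)$. Writing $a=\sqrt{\xi_{l_{\text{opt}}}}$ and $b=\sqrt{\xi_{(\tilde l)}}$ (so $\Delta_{\tilde l}=a-b>0$), differentiation reduces the stationarity condition $\Lambda'(s)=0$ to $(1+2s)/(1-2s)=a/b$, whose unique root $s^{*}=(a-b)/[2(a+b)]$ lies in $(0,\tfrac12)$ because $a>b>0$. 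Substituting $1-2s^{*}=2b/(a+b)$ and $1+2s^{*}=2a/(a+b)$ collapses the two terms to $b(a-b)/4$ and $a(a-b)/4$, giving $\Lambda(s^{*})=-(a-b)^2/4=-\Delta_{\tilde l}^{2}/4$, i.e.\ the claimed $-\tfrac14\Delta_{\tilde l}^{2}=-\tfrac{1}{4\Delta_{\tilde l}^{-2}}$.

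To upgrade this one-sided bound to the exact limit, I would invoke the Gärtner--Ellis theorem for $S_K/K$ at speed $K$, whose scaled cumulant generating function is exactly $\Lambda(\cdot)$. One checks that the origin lies in the interior of $\{|s|<\tfrac12\}$ and that $\Lambda$ is steep: $\Lambda(s)\to+\infty$ both as $s\to\tfrac12^{-}$ (through the first term) and as $s\to-\tfrac12^{+}$ (through the second), so $\Lambda$ is essentially smooth. The theorem then yields a full large-deviation principle with convex good rate function $I=\Lambda^{*}$, whose unique zero sits at the negative mean $\Lambda'(0)=\xi_{(\tilde l)}-\xi_{l_{\text{opt}}}<0$. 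Hence $I$ is increasing on $[0,\infty)$ and $\inf_{z\ge0}I(z)=I(0)=\sup_{s}[-\Lambda(s)]=-\inf_{s}\Lambda(s)=\Delta_{\tilde l}^{2}/4$, matching the Chernoff exponent and establishing the limit.

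The routine parts are the moment-generating-function bookkeeping and the one-variable optimization; the step deserving the most care is the lower-bound half. A bare Chernoff inequality controls the exponent only from above, so I must justify the matching lower bound, either by verifying the essential-smoothness and steepness hypotheses of Gärtner--Ellis as above or, equivalently, by an exponential change-of-measure (tilting by $s^{*}$) argument. A secondary subtlety is confirming that the optimizing tilt $s^{*}$ is interior to $(0,\tfrac12)$, so that the infimum of $\Lambda$ is genuinely attained there and the sub-exponential prefactor contributes nothing to the rate.
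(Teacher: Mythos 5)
The paper does not actually prove this lemma itself --- it states the result and defers the proof to the cited reference \cite{liu.{2017}}, which uses exactly the large-deviations machinery you employ. Your derivation is correct and essentially reproduces that approach: the scaled cumulant generating function $\Lambda(s)=\frac{\xi_{(\tilde l)}s}{1-2s}-\frac{\xi_{l_{\text{opt}}}s}{1+2s}$ is right, the optimizer $s^*=(a-b)/[2(a+b)]$ and the value $\Lambda(s^*)=-(a-b)^2/4=-\Delta_{\tilde l}^2/4$ check out, and you correctly identify that the matching lower bound (via G\"artner--Ellis steepness or tilting) is the step a bare Chernoff bound does not give you. The only caveat worth recording is the degenerate case $\xi_{(\tilde l)}=0$, where $s^*$ sits on the boundary $s=1/2$ and steepness fails on that side; the claimed exponent still emerges as the infimum of $\Lambda$ over $(0,1/2)$, but the lower-bound argument needs a separate (elementary) treatment there.
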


Lemma $\ref{theo_pair}$ shows that the number of training symbols needed to achieve a fixed level of pair-wise misalignment probability of distinguishing the optimal beam from the $\tilde{l}$-th-best ($\tilde{l}=2,\cdots,L_R$) beam is asymptotically proportional to $\Delta_{\left(\tilde{l}\right)}^{-2}$. It means that the beam with a larger suboptimality measure is relatively easier to be distinguished from the optimal beam while the beam with a smaller suboptimality measure is harder to be distinguished. In practical beam training codebook design, the beamforming gain of all the beams in the codebook usually distributes like the absolute values of samples of a $Sinc$ function\cite{gao.{2017}} or other irregular functions \cite{alk.{2014},xiao.{2016}}. It means that there are only a few small elements in $\left\{\Delta_{\left(2\right)},\Delta_{\left(3\right)},\cdots,\Delta_{\left(L_R\right)}\right\}$, and the others are relatively large. The inspiration is that traditional exhaustive search wastes a lot of training resources on the beams that are easy to be distinguished. Intuitively, if more training symbols are allocated to the beams with smaller suboptimality measure, the performance of training accuracy can be improved. However, the main problem is that the suboptimality measure of each beam is unknown, and thus the optimal allocation of training resources can not be accomplished beforehand. Inspired by the idea of successive rejects algorithm in the best arm identification problem \cite{audi.{2016}}, here we propose an adaptive beam training algorithm that can progressively reject the beams which seem to be suboptimal.

The details of the adaptive beam training algorithm is given in Algorithm 1. Informally it proceeds as follows. First the algorithm divides the training resources (i.e. the N training symbols) into $L_R-1$ phases. At the end of each phase, the algorithm discards the beam with the lowest processing output. During the next phase, it allocates equal number of training symbols to each beam which has not been discarded yet. At the end of the last phase, the estimated optimal beam is selected as the last surviving beam. The length (number of training symbols) of each phase is carefully chosen to obtain a good performance. More precisely, the first discarded beam is allocated $n_1=\lceil\frac{1}{\overline{\text{log}}\left(L_R\right)}\frac{N-L_R}{L_R}\rceil$ training symbols, the second discarded beam is allocated $n_2=\lceil\frac{1}{\overline{\text{log}}\left(L_R\right)}\frac{N-L_R}{L_R-1}\rceil$ training symbols,..., and the last two discarded beams are allocated $n_{L_R-1}=\lceil\frac{1}{\overline{\text{log}}\left(L_R\right)}\frac{N-L_R}{2}\rceil$ training symbols. The proposed algorithm does not exceed the budget of $N$ training symbols, since we have
\begin{equation}
\begin{aligned}
\label{num_sum}
&\ \ \ \ n_1+n_2+\cdots+n_{L_R-1}+n_{L_R-1}\\
&\leq L_R+\frac{N-L_R}{\overline{\text{log}}\left(L_R\right)}\left(\frac{1}{2}+\sum_{l=1}^{L_R-1}\frac{1}{L_R+1-l}\right)\\
&\leq N,
\end{aligned}
\end{equation}
in which the first inequality is derived from the fact $\lceil x\rceil\leq x+1$ for any real value $x$, and the second inequality is derived from the definition of $\overline{\text{log}}\left(L_R\right)$. For $L_R=2$, up to rounding effects, algorithm 1 is just traditional exhaustive search.

\begin{algorithm}[t]
\caption{An adaptive beam training algorithm}
\textbf{Initialize:} Let $\mathcal{A}_1=\left\{1,2,\cdots,L_R\right\}$, and
\begin{equation*}
\overline{\text{log}}\left(L_R\right)=\frac{1}{2}+\sum_{i=2}^{L_R}\frac{1}{i}.
\end{equation*}
Let $n_0=0$ and for $k\in\left\{1,\cdots,L_R-1\right\}$,
\begin{equation*}
n_k=\lceil\frac{1}{\overline{\text{log}}\left(L_R\right)}\frac{N-L_R}{L_R+1-k}\rceil,
\end{equation*}
in which $\lceil\cdot\rceil$ is the ceiling function.\\
\textbf{for} $k=1,2,\cdots,L_R-1$ \textbf{do}
\begin{enumerate}
  \item For each $l\in\mathcal{A}_k$, use the beamforming vector $\mathbf{f}_l$ to receive the consecutive $n_k-n_{k-1}$ training symbols.
  \item Let $\mathcal{A}_{k+1}=\mathcal{A}_{k}\backslash \argmin_{l\in\mathcal{A}_k}T_l\left(n_k\right)$ (if there is a tie, break it randomly).
\end{enumerate}
\textbf{end}\\
\textbf{Output:} The index of the estimated optimal beam, which is denoted as $\tilde{l}_{opt}$, is selected to be the unique element of $\mathcal{A}_{L_R}$
\end{algorithm}

It is straightforward to see that as algorithm 1 proceeds, the beams with relatively larger suboptimality measure tend to be discarded at the earlier stages, and thus are allocated less training resources. This makes the proposed algorithm more efficient to use the training resources, and thus have better performance of training accuracy. In the following proposition, we give an asymptotic upper bound of the misalignment proabbility of algorithm 1, which is defined as $p_{\text{miss}}^{\text{adpt}}(N)=\text{Pr}\left\{ \tilde{l}_{\text{opt}}\neq l_{\text{opt}}\right\}$.

\newtheorem{mypro}{Proposition}
\begin{mypro}
\label{asym_upper}
In algorithm 1, assume that the number of training symbols is $N$, then the misalignment probability $p_{\text{miss}}^{\text{adpt}}(N)$ satisfies
\begin{equation}
\label{peroformance_up}
\lim_{N\to \infty}\frac{1}{N}\log p_{\text{miss}}^{\text{adpt}}(N) \leq -\frac{1}{4\overline{\text{log}}\left(L_R\right)H},
\end{equation}
where $H=\max_{l\in\left\{1,2,\cdots,L_R\right\}} l\Delta_{\left(l\right)}^{-2}$.
\end{mypro}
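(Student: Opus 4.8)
The plan is to mirror the analysis of the successive rejects algorithm \cite{audi.{2016}}, channeling the pairwise estimate of Lemma \ref{theo_pair} through a two-step union bound. The starting point is that a misalignment $\tilde{l}_{\text{opt}}\neq l_{\text{opt}}$ can occur only if the optimal beam $l_{\text{opt}}$ is the beam discarded at some phase $k$, which requires it to be a minimizer of the processing outputs among the $L_R-k+1$ beams still alive in $\mathcal{A}_k$. Hence I would first write
\begin{equation*}
p_{\text{miss}}^{\text{adpt}}(N)\leq\sum_{k=1}^{L_R-1}\text{Pr}\left\{l_{\text{opt}}\in\argmin_{l\in\mathcal{A}_k}T_l(n_k)\right\},
\end{equation*}
noting that at the phase-$k$ decision every beam in $\mathcal{A}_k$ (including $l_{\text{opt}}$, if present) has received exactly $n_k$ training symbols, so any later comparison between two such beams is a fair one.

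The combinatorial heart of the argument is the following pigeonhole observation. Before phase $k$ begins, exactly $k-1$ beams have been discarded, whereas the set of the $k$ worst beams $\{(L_R-k+1),(L_R-k+2),\cdots,(L_R)\}$ has $k$ elements; therefore at least one worst beam, say $(j)$ with $j\geq L_R-k+1$, still belongs to $\mathcal{A}_k$. If $l_{\text{opt}}$ is discarded at phase $k$ it is the minimizer of $\mathcal{A}_k$, so in particular $T_{l_{\text{opt}}}(n_k)\leq T_{(j)}(n_k)$. The phase-$k$ event is thus contained in the union, over the $k$ worst beams, of the pairwise events $\{T_{l_{\text{opt}}}(n_k)<T_{(j)}(n_k)\}$, and a second union bound gives
\begin{equation*}
p_{\text{miss}}^{\text{adpt}}(N)\leq\sum_{k=1}^{L_R-1}\ \sum_{j=L_R-k+1}^{L_R}\text{Pr}\left\{T_{l_{\text{opt}}}(n_k)<T_{(j)}(n_k)\right\},
\end{equation*}
a sum of only $\frac{1}{2}L_R(L_R-1)$ terms, a number that does not depend on $N$.

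Next I would invoke Lemma \ref{theo_pair}: since $l_{\text{opt}}$ and $(j)$ carry the common budget $n_k$, the decay rate of each pairwise probability in $n_k$ is $-\frac{1}{4}\Delta_{(j)}^2$. As the gaps are nondecreasing in rank and $j\geq L_R-k+1$, we have $\Delta_{(j)}\geq\Delta_{(L_R+1-k)}$, while the definition of $n_k$ together with $\lceil x\rceil\geq x$ gives $n_k/N\to\big(\overline{\text{log}}(L_R)(L_R+1-k)\big)^{-1}$ as $N\to\infty$. Writing $m=L_R+1-k$ and combining, the rate in $N$ of the $(k,j)$ term is at most
\begin{equation*}
-\frac{\Delta_{(j)}^2}{4\,\overline{\text{log}}(L_R)(L_R+1-k)}\leq-\frac{1}{4\,\overline{\text{log}}(L_R)}\cdot\frac{\Delta_{(m)}^2}{m}\leq-\frac{1}{4\,\overline{\text{log}}(L_R)H},
\end{equation*}
where the final inequality is exactly the statement $\Delta_{(m)}^2/m\geq 1/H$ dual to $H=\max_l l\Delta_{(l)}^{-2}$. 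Since $\frac{1}{N}\log$ of a finite sum of nonnegative terms converges to the maximum of the individual rate limits, taking $N\to\infty$ delivers the claimed bound.

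The step I expect to demand the most care is the asymptotic bookkeeping that couples Lemma \ref{theo_pair} to the union bound, rather than the pigeonhole reduction, which is elementary once stated. Lemma \ref{theo_pair} is a limit statement as the common budget $\to\infty$, so I must verify that every $n_k$ grows linearly in $N$ (the ceiling and the additive $-L_R$ corrections being negligible against $N$), that each per-term rate limit genuinely exists, and that the ``maximum of rates'' rule for the logarithm of a finite sum may be applied to the $\frac{1}{2}L_R(L_R-1)$ terms simultaneously. Controlling these rounding effects uniformly across the phases is where the real technical effort lies.
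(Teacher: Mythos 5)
Your proposal is correct and follows essentially the same route as the paper's own proof: the same pigeonhole observation that at least one of the $k$ worst beams survives into phase $k$, the same two-level union bound over phases and over those worst beams, the same application of Lemma \ref{theo_pair} to each pairwise term, and the same combination of the rank-monotonicity of the gaps with the definition of $n_k$ and $H$ to bound every exponent by $-\frac{N-L_R}{4\overline{\text{log}}(L_R)H}$ before invoking finiteness of the number of terms. Your bookkeeping of the summation limits and of the linear growth of $n_k$ in $N$ is in fact slightly more careful than the paper's.
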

\begin{proof}[\textbf{Proof}]
According to Lemma 2, the pair-wise misalignment probability can be represented as
\begin{equation}
\footnotesize
\begin{aligned}
\label{proof_pair}
&\ \ \ \text{Pr}\left\{T_{l_{\text{opt}}}\left(n_i\right)\leq T_{\left(l\right)}\left(n_i\right)\right\}=\exp\left(-\frac{n_i\Delta_{\left(l\right)}^2}{4}+o_{\left(l\right)}\left(n_i\right)\right),
\end{aligned}
\end{equation}
in which $o_{\left(l\right)}\left(n_i\right)$ is a high order infinitesimal of $n_i$ that satisfies $\lim_{n_i\to \infty}\frac{o_{\left(l\right)}\left(n_i\right)}{n_i}=0$. We can assume that training symbols are received before the beginning of beam training. Thus $T_l\left(K\right)$ is well defined even if $\mathbf{f}_l$ has not been used to receive $K$ training symbols. During phase $i$, at least one of $i$ worst beams is surviving. So, if the optimal beam $l_{\text{opt}}$ is discarded at the end of phase $i$, it means that $T_{l_{\text{opt}}}\left(n_i\right)\leq\text{max}_{l\in\left\{\left(L_R\right),\left(L_R-1\right),\cdots,\left(L_R+1-i\right)\right\}}T_{l}\left(n_i\right)$. Based on (\ref{proof_pair}) and the union bound, the misalignment probability of algorithm 1 satisfies
\begin{equation}
\footnotesize
\begin{aligned}
\label{proof}
p_{\text{miss}}^{\text{adpt}}(N)
&\leq\sum_{i=1}^{L_R-1}\text{Pr}\left\{T_{l_{\text{opt}}}\left(n_i\right)\leq\max_{l\in\left\{\left(L_R\right),\cdots,\left(L_R+1-i\right)\right\}}T_{l}\left(n_i\right)\right\}\\
&\leq \sum_{i=1}^{L_R-1}\sum_{l=L_R+1-i}^{L_R-1}\text{Pr}\left\{T_{l_{\text{opt}}}\left(n_i\right)\leq T_{\left(l\right)}\left(n_i\right)\right\}\\
&\leq \sum_{i=1}^{L_R-1} \sum_{l=L_R+1-i}^{L_R-1}\exp\left(-\frac{n_i\Delta_{\left(L_R+1-i\right)}^2}{4}+o_{\left(l\right)}\left(n_i\right)\right)\\
&\leq \sum_{i=1}^{L_R-1} \sum_{l=L_R+1-i}^{L_R-1}\exp\left(-\frac{N-L_R}{4\overline{\text{log}}\left(L_R\right)H}+o_{\left(l\right)}\left(n_i\right)\right),
\end{aligned}
\end{equation}
in which the last inequality holds because
\begin{equation}
\begin{aligned}
\label{proof_con2}
n_i\Delta_{\left(L_R+1-i\right)}^2 &\geq \frac{N-L_R}{\overline{\text{log}}\left(L_R\right)}\frac{1}{\left(L_R+1-i\right)\Delta_{\left(L_R+1-i\right)}^{-2}}\\
&\geq \frac{N-L_R}{\overline{\text{log}}\left(L_R\right)H}.
\end{aligned}
\end{equation}
Using definition of $n_i$ and $o_{l}\left(n_i\right)$, we have $\lim_{N\to \infty}\frac{o_{\left(l\right)}\left(n_i\right)}{N} = 0$. This means that each term in the summation ($\ref{proof}$) decays exponentially at rate $4\overline{\text{log}}\left(L_R\right)H$ as $N$ goes to infinity. Since there are only finite terms in the summation ($\ref{proof}$), we conclude that the summation decays exponentially at the same rate, which completes the proof.
\end{proof}

Now we compare the performance of algorithm 1 and traditional exhaustive search. Define the index in the maximization $H$ as $l_H=\argmax_{l\in\left\{1,2,\cdots,L_R\right\}} l\Delta_{\left(l\right)}^{-2}$.
As mentioned before, in practical beamforming codebook design, only the first few elements in $\left\{\Delta_{\left(2\right)},\Delta_{\left(3\right)},\cdots,\Delta_{\left(L_R\right)}\right\}$ are small, and the rest are relatively large. It implies that $l_H$ is usually much smaller than $L_R$, especially when $L_R$ is very large, which corresponds to the situation when the high-resolution beam training codebook is used. Thus we have:
\begin{equation}
H=l_H\Delta_{\left(l_H\right)}^{-2} \leq l_H\Delta_{l*}^{-2} \ll L_R\Delta_{l^*}^{-2}.
\end{equation}
which further implies that $\overline{\text{log}}\left(L_R\right)H$ is smaller than $L_R\Delta_{l^*}^{-2}$. Using the results in proposition 1 and lemma 1, we have
\begin{equation}
\label{conclude}
\lim_{N\to \infty}\frac{1}{N}\log p_{\text{miss}}^{\text{adpt}}(N) < \lim_{N\to \infty}\frac{1}{N}\log p_{\text{miss}}^{\text{ex}}(N),
\end{equation}
It means that the proposed adaptive algorithm asymptotically outperforms traditional exhaustive search in terms of training accuracy. Here we remark that the above analysis is not a formal mathematical proof, but it is true in almost all practical situations. In the next section, numerical simulation shows that this relative performance behavior holds in both asymptotic and non-asymptotic regime.

\section{numerical simulation}
In this section, we compare the performance between the proposed algorithm and traditional exhaustive search through numerical examples. In the following simulations, we assume that the AoA search interval $\Phi$ covers the angular space $\left[-\frac{\pi}{2},\frac{\pi}{2}\right]$. We choose the simplest codebook design \cite{seo.{2016},gao.{2017}}, in which the $l$-th beamforming vector is
\begin{equation}
\label{simple_beam}
\textbf{f}_l=\frac{1}{\sqrt{L_R}}\left[1,e^{-j2\pi\theta_l},\cdots,e^{-j2\pi\left(L_R-1\right)\theta_l}\right],
\end{equation}
and $\theta_l=-\frac{1}{2}+\frac{l-1}{L_R}$, $l=1,2,\cdots, L_R$. We choose the number of receive antennas $L_R=64$. As for the channel, we choose the path AoA $\phi$ randomly from the AoA interval $\left[-\frac{\pi}{2},\frac{\pi}{2}\right]$. For example, we choose $\phi=0.47$ in the following simulation. The effective beamforming gain $g_l$ ($l=1,2,\cdots,L_R$) is given in the left part of Fig. $\ref{process}$. It is straightforward to see that there are only a few beamforming gain which are large and the others are relatively very small, as analyzed in the previous section. Actually, if we let $\phi$ be a uniform random variable in $\left[-\frac{\pi}{2},\frac{\pi}{2}\right]$, the conclusion holds with a high probability. We remark here that even if more complex codebook design is used \cite{alk.{2014}}, this conclusion also holds true. More detailed analysis and simulations will be given in our future work.
\begin{figure}[h!]
    \centering
    \includegraphics[width=7.2cm]{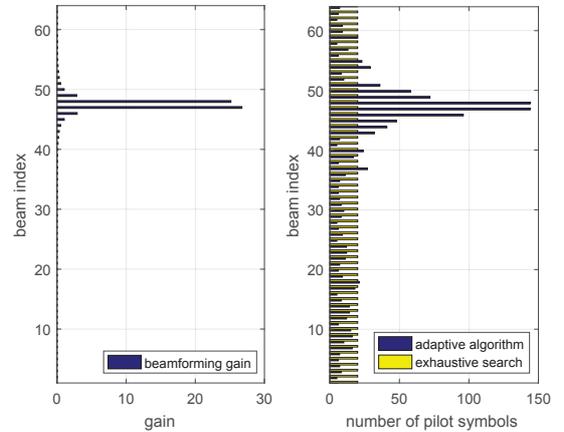}
    \small
    \caption{Left: horizontal coordinates is the effective beamforming gain $g_l$, vertical coordinates is the index of beams; Right: horizontal coordinates is the number of training symbols allocated to each beam, vertical coordinates is the index of beams. $N=1280$.}
    \label{process}
\end{figure}

\begin{figure}[h!]
    \centering
    \includegraphics[width=7.2cm]{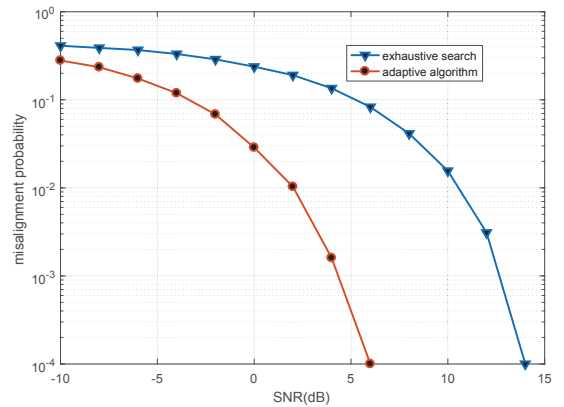}
    \small
    \caption{Misaligment probability versus SNR. The blue line corresponds to traditional exhaustive search, and the red line corresponds to the adaptive algorithm. $N=1280$. }
    \label{snr}
\end{figure}

In order to compare the performance of the proposed adaptive algorithm with traditional exhaustive search, we run the two algorithms and show the number of training symbols allocated to each beam in the right part of Fig. $\ref{process}$. The total number of training symbols is set to be $N=1280$. It is obvious that the proposed algorithm allocates more training symbols to the beams with larger beamforming gain, just as analyzed before. Furthermore, we set the channel path gain as $\alpha=1$, and define the signal-to-noise-ratio (SNR) as $\text{SNR}=\frac{P_T}{\sigma^2}$. The misalignment probability of the two algorithm versus SNR is given in Fig. $\ref{snr}$, which shows that the proposed adaptive algorithm has much lower misalignment probability than traditional exhaustive search at all considered SNR values.

\begin{figure}[h!]
    \centering
    \includegraphics[width=7.4cm]{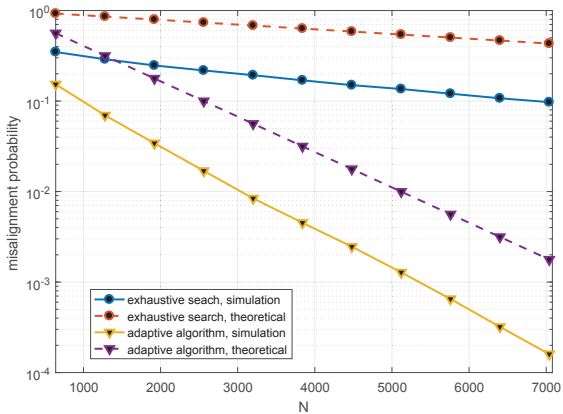}
    \small
    \caption{Misaligment probability versus the total number of training symbols $N$, $\text{SNR}=-2\text{dB}$}
    \label{num}
\end{figure}

In order to further compare the performance of the two algorithms and verify the theoretical results in the previous section, we fix $\text{SNR}$ and study the performance of misalignment probability in terms of the total number of training symbols $N$. Taking $\text{SNR}=-2\text{dB}$ as an example, we validate our analysis in Fig. $\ref{num}$. It is straightforward to see that the blue solid line is almost parallel to the red dashed line, which verifies the asymptotic results ($\ref{performance}$). What is surprising is that the yellow solid line is almost parallel to the purple dashed line, which indicates that the upper bound ($\ref{peroformance_up}$) is true and the bound may be tight. 
At last, compare the blue solid line with the yellow solid line in Fig. $\ref{num}$, we can see that with the chosen parameter, the proposed adaptive algorithm asymptotically outperforms traditional exhaustive search ($\lim_{N\to \infty}\frac{1}{N}\log p_{\text{miss}}^{\text{adpt}}(N)\approx -9.0\times 10^{-4}$, $\lim_{N\to \infty}\frac{1}{N}\log p_{\text{miss}}^{\text{ex}}(N)\approx -1.2\times 10^{-4}$), and this relative performance behaviour also holds when $N$ is small.

\section{conclusions}
In millimeter wave communications, beam training is an effective way to achieve beam alignment. In this paper, the beam training problem is studied from a different perspective and a new beam training algorithm is proposed. Unlike traditional exhaustive search, which allocates training resources uniformly to different beams in the pre-designed codebook, the new algorithm adaptively allocates training resources. Specifically, it allocates more training resources to the beams with higher beamforming gain, which are relatively more difficult to be distinguished from the best beam, while uses less resources to distinguish the beams with low beamforming gain, which are relatively easier to identify. We give an asymptotic upper bound of misalignment probability of the proposed algorithm. Using this upper bound and numerical simulations, we show that in practical situations, the proposed adaptive algorithm asymptotically outperforms the traditional exhaustive search in terms of misalignment probability, subject to the same training overhead. Moreover, numerical simulation shows that this relative performance behavior also holds in the non-asymptotic regime. In the future, we will give more detailed analysis of the proposed algorithm in this paper and apply it to more complex situations.


%

\section*{Acknowledgment}
This work was supported by the Natural Science Foundation of Guangdong Province (Grant No. 2015A030312006)




\ifCLASSOPTIONcaptionsoff
  \newpage
\fi

\end{document}